\documentclass{lmcs}


\usepackage{enumerate}
\usepackage{hyperref}
\usepackage{amssymb,latexsym,bm}

\theoremstyle{plain}

\newcommand\twoheaddownarrow{\mathord{\rotatebox{90}{$\twoheadleftarrow$}}}
\def\kurung#1{\left(#1\right)}
\def\himp#1{\{#1\}}

\def\t#1{\text{#1}}
\def\mc#1{\mathcal{#1}}
\def\sub{\subseteq~\!\!\!}

\def\com{\t{c}}

\newcommand{\cl}{\operatorname{cl}}

\newcommand{\Irr}{\operatorname{Irr}}

\newcommand{\bigsup}{\bigvee}
\let\olddownarrow\downarrow 
\renewcommand{\downarrow}{\mathord{\olddownarrow}}

\let\olduparrow\uparrow 
\renewcommand{\uparrow}{\mathord{\olduparrow}}
\newcommand{\lra}{\longrightarrow}

\begin{document}

\title[Strong completion of spaces]{Strong completion of spaces}

\author[H. Andradi]{Hadrian Andradi}	
\address{National Institute of Education, Nanyang Technological University, 1 Nanyang Walk, Singapore 637616}	
\email{hadrian.andradi@gmail.com}  
\thanks{The first author is supported by Nanyang Technological University Research Scholarship (RSS)}	

\author[W. K. Ho]{Weng Kin Ho}	
\address{National Institute of Education, Nanyang Technological University, 1 Nanyang Walk, Singapore 637616}	
\email{wengkin.ho@nie.edu.sg}  



\keywords{irreducibly-derived topology; $I$-closed set; $SI^{+}$-continuous function; strongly complete space; strong completion}
\subjclass[2010]{06B35,54D35}


\begin{abstract}
  \noindent A non-empty subset of a topological space is irreducible if whenever it is covered by the union of two closed sets, then already it is covered by one of them.  Irreducible sets occur in proliferation: (1) every singleton set is irreducible, (2) directed subsets (which of fundamental status in domain theory) of a poset are exactly its Alexandroff irreducible sets, (3) directed subsets (with respect to the specialization order) of a $T_0$ space are always irreducible, and (4) the topological closure of every irreducible set is again irreducible.  In recent years, the usefulness of irreducible sets in domain theory and non-Hausdorff topology has expanded.  Notably, Zhao and Ho (2009) developed the core of domain theory directly in the context of $T_0$ spaces by choosing the irreducible sets as the topological substitute for directed sets.  Just as the existence of suprema of directed subsets is featured prominently in domain theory (and hence the notion of a dcpo -- a poset in which all directed suprema exist), so too is that of irreducible subsets in the topological domain theory developed by Zhao and Ho.  The topological counterpart of a dcpo is thus this: A $T_0$ space is said to be \emph{strongly complete} if the suprema of all irreducible subsets exist.  In this paper, we show that the category, $\mathbf{scTop^+}$, of strongly complete $T_0$ spaces forms a reflective subcategory of a certain lluf subcategory, $\mathbf{Top^+}$, of $T_0$ spaces.
\end{abstract}

\maketitle

\section{Introduction}
\label{sec: intro}
The story we are about to tell in this paper involves two kinds of mathematical structures: domains and $T_0$ spaces.
On one hand, domains are partially ordered sets (posets, for short) that have structures rich enough to describe approximation and convergence.  On the other other, $T_0$ spaces are topological spaces which have minimal separation axiom in that every distinct pair of points $x \neq y$ can be distinguished by the existence of an open that contains one point but not the other.

\paragraph{\textbf{Domains.}} In domain theory, the abstract process of producing increasingly accurate approximations to an element is modelled by the directed sets.  A subset $D$ of a poset $P$ is \emph{directed} if for every two elements $d_1$ and $d_2 \in D$ there exists $d_3 \in D$ such that $d_1 \leq d$ and $d_2 \leq d$. Indeed the directed subsets play an indispensable role in domain theory because many fundamental domain-theoretic concepts are formulated using them.  One such notion is that of the \emph{directed complete partial orders} (\emph{dcpo's}, for short) --  these are posets of which every directed subset has supremum;   the justification being that directed sets model the phenomenon of approximation should converge to their suprema.  We shall make this notion of convergence more precisely in a moment, but first we need another important domain-theoretic definition which makes essential use of directed sets, i.e., the \emph{way-below} relation $\ll$.
Here is its definition: $x \ll y$ if for every directed subset $D$ of a dcpo $P$, whenever $\bigsup D \geq y$ there is already $d \in D$ such that $d \geq x$.  Denoting the set $\{y \in P \mid y \ll x\}$ by $\twoheaddownarrow x$, a dcpo $P$ is a \emph{domain} if for every $x \in P$ the set $\twoheaddownarrow x$ is directed and has supremum equals to $x$.

Now we return to elaborate on the convergence we had in mind concerning directed sets, and the precise formulation is given by: A net $(x_i)_{i \in I}$ converges to $y$, written as $(x_i) \lra_d y$, in the dcpo $P$ if there exists a directed set of eventual lower bounds of $(x_i)_{i \in I}$ whose supremum is above $y$.  A convergence structure $(P,\lra)$ on a dcpo is said to be \emph{topological} if there exists a topology on it such that $(x_i) \lra y$ if and only if whenever an open set $U$ contains $y$ then $U$ contains $(x_i)_{i \in I}$ eventually.
Domains are very special dcpo's whose characterized is succinctly given by: The convergence structure $(P,\lra_d)$ on a dcpo $P$ is topological if and only if $P$ is a domain.  The topology which makes this convergence topological is the \emph{Scott topology} -- the most prominently featured topology in domain theory.  Again, the Scott opens which constitute the topology are defined via directed sets: $U \subseteq P$ is \emph{Scott open} if $U$ is Alexandroff open (i.e., upper) and inaccessible by (existing) directed suprema, i.e., whenever $\bigsup D \in U$ one has $D \cap U \neq \emptyset$.  Crucially, the Scott topology on a poset turns it into a $T_0$ space.

\paragraph{\textbf{$T_0$ spaces.}}
$T_0$ spaces can in fact be partially ordered: given any $T_0$ space $X$, one defines the specialization order $\leq_X$ on it as follows: $x \leq_X y$ if every open $U$ that contains $x$ must contain $y$.  Whenever an order-theoretic concept is mentioned in the context of a $T_0$ space, its meaning is interpreted with the specialization order in mind.  For example, a directed subset $D$ of a $T_0$ space $X$ would mean that the set $D$ is directed with respect to the specialization order $\leq_X$.  Note that the specialization orders induced by the Alexandroff and Scott topologies on a poset $P$ both coincide with the underlying order of $P$.

Through the topological lens, directed subsets have another guise.  The salient topological attribute is that of irreducibility.  A non-empty subset $F$ of a $T_0$ space $X$ is said to be \emph{irreducible} if whenever $F \subseteq A \cup B$ for two closed sets $A$ and $B$ then either $F \subseteq A$ or $F \subseteq B$.  Irreducible subsets occur ubiquitously, in that every singleton subset is irreducible with respect to any topology and the closure of irreducible subsets is again irreducible.  Directed sets and irreducible sets are closely linked in nifty ways: (1) Directed subsets of a poset $P$ are exactly its Alexandroff irreducible subsets, and (2) directed subsets of a $T_0$ space are always irreducible.  In general, an irreducible set of a $T_0$ space need not be directed (see~\cite{johnstone82}).  Thus, the notion of an irreducible set can be seen as a topological generalization of directed set.

Just as directed sets are to domains, so are irreducible sets to $T_0$ spaces.  Irreducible sets play an important role in the topology. It is well-known that every topological space $X$ gives rise to its lattice of opens, $\mathcal{O}(X)$, whose elements are open sets of $X$ ordered by inclusion.  If $X$ and $Y$ are homeomorphic spaces, then their lattice of opens $\mathcal{O}(X)$ and $\mathcal{O}(Y)$ are isomorphic.  Two topological spaces $X$ and $Y$ are said to be \emph{lattice equivalent} if their lattice of opens are isomorphic.  It turns out that lattice equivalent spaces are not necessarily homeomorphic unless both the spaces are sober.  A space is \emph{sober} if every closed irreducible set is the closure of a unique singleton -- which is where irreducible sets get involved.  Because of the many pleasing properties that they enjoy, sober spaces have a special place in the study of domain theory and non-Hausdorff topology. For instance, every sober space is a dcpo with respect to its specialization order.

\paragraph{\textbf{Domains in $T_0$ space.}} In recent years, many theorems in domain theory have their analogues in the wider context of $T_0$ spaces.  Here is an example that is particularly important to the development of this paper.  As is known, not every poset is a dcpo.  Given a poset $P$, Zhao and Fan~\cite{zhaofan10} showed that one can complete $P$ to a dcpo $E(P)$ in the sense that there exists a Scott continuous embedding $P \hookrightarrow E(P)$ such that every Scott continuous mapping $f:P \lra Y$, where $Y$ is a dcpo, can be extended to a unique Scott continuous mapping $\hat{f} : E(P) \lra Y$.  The dcpo $E(P)$ is called the canonical dcpo-completion of the poset $P$.
This domain-theoretic result has recently been generalized to the realm of $T_0$ space by Keimel and Lawson (\cite{keimellawson09}) as we shall explain.  A $T_0$ space is a \emph{d-space} (also known as a \emph{monotone convergence space}) if it is a dcpo and every open open is Scott open.  Every domain endowed with its Scott topology is a d-space. Thus, a d-space can be thought of as a $T_0$-space that behaves like a domain since every directed subset has a supremum and the directed set as a net converges to its supremum.  Essentially, Keimel and Lawson showed in~\cite{keimellawson09} that there is also a canonical D-completion for a $T_0$ space to a d-space in much the same spirit as a dcpo-completion.

Another example of this kind of generalization is that of the order-theoretic Rudin's lemma, which is central to the theory of quasicontinuous domains, to its topological version (\cite{heckmannkeimel13}).  Instances of such upgrading of domain-theoretic results to the context of $T_0$ spaces occur in such proliferation that J. D. Lawson in his plenary lecture at the Sixth International Symposium in Domain Theory at Changsha, China drew up a systematic scientific program that investigates how domain theory might manifest directly in $T_0$ spaces.

This programmatic call was quickly responded by several domain-theorists.  Notably, D. Zhao and the second author saw that irreducible sets are a topological generalization of directed sets, and systematically substituted irreducible sets for directed sets in many of their key notions~(\cite{zhaoho15}).  In that work, they developed the core of domain theory \emph{directly} in the context of $T_0$ spaces.  Consequently, several landmark results in domain theory find their analogues in the $T_0$ space setting.  We shall be revisit some of the key concepts and results derived in~\cite{zhaoho15}.

This paper is a continuation of the work started in \cite{zhaoho15} that in particular extends the completion result obtained in \cite{zhaofan10}.  In domain theory, not directed subset of a poset has supremum; likewise for $T_0$ spaces, not irreducible subset of a $T_0$ space has supremum.  A $T_0$ space of which every irreducible subset has a supremum is called a \emph{strongly complete space}\footnote{This term was first coined in~\cite{hojungxi16} but in a more specific context of Scott topology}.  In this paper, we give a canonical \emph{strong completion} of a $T_0$ space a l\'{a} Zhao and Fan~(\cite{zhaofan10}), and thereby establish that the category $\mathbf{scTop^+}$ of strongly complete $T_0$ spaces is a reflective full subcategory of the category $\mathbf{Top^+}$ of $T_0$ spaces whose morphisms are those continuous maps that preserve existing suprema of irreducible sets.  Thus, our results give yet another generalization of Zhao and Fan~\cite{zhaofan10},  following the approach suggested by Zhao and Ho~\cite{zhaoho15}.

\paragraph{\textbf{Organization.}}  We organize this paper as follows.  Section~\ref{sec: prelim} gathers all necessary preliminaries for the development of our results.  Because many of these originate from \cite{zhaoho15}, no proofs will be given in that section.  We introduce the notions of strongly complete spaces and mappings that preserve the exisiting suprema of irreducible sets in Section~\ref{sec: strongly complete space}.  The main body of work in this paper is performed in Section~\ref{sec: strong completion}, where the canonical strong completion of a space is given.  Some final remarks are made in Section~\ref{sec: conclusion}.

\section{Preliminaries}
\label{sec: prelim}
In this section, we present some basic definitions, notations, and results which are important in our ensuing development.

Given a topological space $X$, we denote the collection of all open (respectively, closed) sets of $X$ by $\mc{O}(X)$ (respectively, $\Gamma(X)$). When a $T_0$ space $X$ is considered as a poset, the default order is its specialization order $\leq_X$ (or $\leq$ whenever there is no confusion). Throughout our discourse, the term \emph{space} always refers to a $T_0$ topological space.

This paper deals with irreducible sets and it is always handy to have an alternative but equivalent description of these sets.
A subset $F$ of a space $X$ is irreducible if and only if whenever it meets two open sets of $X$ non-emptily it also meets their intersection non-emptily.  The set of all irreducible subsets of a space $X$ is denoted by $\Irr(X)$.  An element $F \in \Irr(X)$ is in $\Irr^{+}(X)$ if and only if $\bigsup F \in X$. The following proposition gathers some elementary properties regarding irreducible sets at one place:
\begin{prop}
\label{prop: elementary of irred}
For any space $X$, it holds that:
\begin{enumerate}
\item every singleton is irreducible;
\item a set is in $\Irr(X)$ if and only if its closure is in $\Irr(X)$;
\item image of an irreducible space under (topological) continuous function is irreducible in the target space;
\item every directed subset of $X$ is irreducible; and
\item for every nonempty subspace $Y$ of $X$, it holds that $\Irr(Y) \sub \Irr(X)$.
\end{enumerate}
\end{prop}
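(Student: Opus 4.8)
Every clause is checked against the open-set reformulation of irreducibility recalled just above the statement: a non-empty set $F\subseteq X$ is irreducible iff, whenever $F$ meets two opens $U$ and $V$ non-emptily, it meets $U\cap V$ non-emptily. I will run each part through this criterion; non-emptiness is automatic in every case and will not be belaboured.

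For (1), if the singleton $\{x\}$ meets opens $U$ and $V$, then $x\in U$ and $x\in V$, so $x\in U\cap V$ and $\{x\}$ meets $U\cap V$. For (2), I use the standard fact that for any open $U$ one has $\overline{F}\cap U\neq\emptyset$ iff $F\cap U\neq\emptyset$ (an open neighbourhood of a closure point must meet the set, and conversely). Hence $F$ and $\overline{F}$ meet exactly the same opens, so the defining condition holds for $F$ iff it holds for $\overline{F}$; this delivers both directions of the equivalence simultaneously. For (3), let $f\colon X\lra Y$ be continuous and $F\in\Irr(X)$; if $f(F)$ meets opens $U,V$ of $Y$, then $F$ meets the opens $f^{-1}(U)$ and $f^{-1}(V)$ by continuity, hence $F$ meets $f^{-1}(U)\cap f^{-1}(V)=f^{-1}(U\cap V)$, and therefore $f(F)$ meets $U\cap V$.

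For (4), recall that open sets are up-sets for the specialization order. Let $D$ be directed, hence non-empty; if $D$ meets opens $U$ and $V$, choose $d_1\in D\cap U$ and $d_2\in D\cap V$, and by directedness pick $d_3\in D$ with $d_1\leq d_3$ and $d_2\leq d_3$. Since $U$ and $V$ are up-sets, $d_3\in U\cap V$, so $d_3\in D\cap U\cap V$. For (5), let $Y$ be a non-empty subspace of $X$ and $F\in\Irr(Y)$; if $F$ meets opens $U,V$ of $X$, then, since $F\subseteq Y$, $F$ meets the relatively open sets $U\cap Y$ and $V\cap Y$, hence $F$ meets $(U\cap Y)\cap(V\cap Y)=(U\cap V)\cap Y\subseteq U\cap V$, so $F$ meets $U\cap V$ in $X$.

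\textbf{Anticipated difficulty.} None of the parts is genuinely hard: the whole proof is bookkeeping that transports the defining condition across preimages (3), closures (2), and restrictions (5). The two spots worth a word of care are (4), where one must explicitly invoke that opens are up-closed in the specialization order, and (2), where one should observe that the argument is symmetric so that both implications come for free.
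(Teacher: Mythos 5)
Your proof is correct and complete; all five parts are handled properly via the open-set characterization of irreducibility that the paper recalls immediately before the proposition, which is exactly the intended route. The paper itself gives no proof of this statement (it defers to the cited reference of Zhao and Ho), so there is nothing to compare against; your arguments are the standard ones and fill that gap adequately.
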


Let $X$ be a space. A subset $U$ of $X$ is said to be \emph{irreducibly open} if (i) $U \in \mc{O}(X)$ and (ii) whenever $\bigsup F \in U$ for some $F \in \Irr^{+}(X)$ it holds that $F \cap U \neq \emptyset$.  The collection of all irreducibly open subsets of $U$ form a topology on $X$ called the irreducibly-derived topology.  The set $X$ endowed with this topology is denoted by $SI(X)$.  Open (respectively, closed) sets in $SI(X)$ are called \emph{$SI$-open sets} (respectively, \emph{$SI$-closed sets}).  For simplicity, we shall write $\mc{O}_{SI}(X)$ instead of $\mc{O}(SI(X))$ to denote the irreducibly-derived topology; likewise $\Gamma_{SI}(X)$ instead of $\Gamma(SI(X))$ for the collection of closed sets in $SI(X)$.

\begin{prop} (\cite{zhaoho15})
Let $X$ be a space. Then the following hold:
\begin{enumerate}
\item For any $x \in X$, $\cl_X(\himp{x}) = \cl_{SI}(\himp{x})$.
\item A subset $C$ of $X$ is closed in $SI(X)$ if and only if $C$ is closed in $X$ and for every $F \in\Irr^{+}(X)$, $F \subseteq C$ implies $\bigsup F \in C$.
\item A subset $U$ of $X$ is clopen in $X$ if and only if it is clopen in $SI(X)$.
\item $X$ is connected if and only if $SI(X)$ is connected.
\end{enumerate}
\end{prop}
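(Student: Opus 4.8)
The plan is to prove the four items in the stated order, with item (1) carrying essentially all the content and (2)--(4) following formally. The one structural fact used throughout is that, by definition, every $SI$-open set is open in $X$; thus $\mc{O}_{SI}(X)\subseteq\mc{O}(X)$, the $SI$-topology is coarser than the original one, and consequently $SI$-closures contain $X$-closures while any set whose complement is $SI$-open is in particular $X$-closed. For (1), the inclusion $\cl_X(\himp{x})\subseteq\cl_{SI}(\himp{x})$ is then immediate. For the reverse inclusion I would check that $\cl_X(\himp{x})$, which is the lower set $\downarrow x=\{y\in X\mid y\leq_X x\}$, is $SI$-closed, i.e.\ that $X\setminus\downarrow x$ is $SI$-open: it is open in $X$ since $\downarrow x$ is $X$-closed, and if $F\in\Irr^{+}(X)$ satisfies $\bigsup F\notin\downarrow x$ then $x$ cannot be an upper bound of $F$ (otherwise $\bigsup F\leq_X x$), so some $f\in F$ lies in $X\setminus\downarrow x$, i.e.\ $F\cap(X\setminus\downarrow x)\neq\emptyset$. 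Hence $\cl_{SI}(\himp{x})\subseteq\downarrow x=\cl_X(\himp{x})$.

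Item (2) is then a matter of unwinding definitions: $C\in\Gamma_{SI}(X)$ iff $X\setminus C$ is $SI$-open, i.e.\ iff $X\setminus C\in\mc{O}(X)$ --- equivalently $C\in\Gamma(X)$ --- and, for every $F\in\Irr^{+}(X)$, $\bigsup F\in X\setminus C$ implies $F\cap(X\setminus C)\neq\emptyset$; the contrapositive of this last implication is precisely ``$F\subseteq C$ implies $\bigsup F\in C$''.

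For (3), the direction ``clopen in $SI(X)$ $\Rightarrow$ clopen in $X$'' holds simply because both $U$ and $X\setminus U$, being $SI$-open, are open in $X$. Conversely, suppose $U$ is clopen in $X$; by the symmetry between $U$ and $X\setminus U$ it suffices to show $U\in\mc{O}_{SI}(X)$. Given $F\in\Irr^{+}(X)$ with $\bigsup F\in U$, irreducibility applied to the cover $F\subseteq U\cup(X\setminus U)$ by two $X$-closed sets gives $F\subseteq U$ or $F\subseteq X\setminus U$; the second option is impossible, since $X\setminus U$ is also open, hence an upper set for $\leq_X$, so a nonempty $F\subseteq X\setminus U$ would force $\bigsup F\in X\setminus U$, contradicting $\bigsup F\in U$. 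Thus $F\subseteq U$ and in particular $F\cap U\neq\emptyset$, as required. Item (4) is then immediate: a space is connected iff its only clopen subsets are $\emptyset$ and the whole space, and by (3) the clopen subsets of $X$ and of $SI(X)$ are the same.

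The only step that needs any thought is the converse half of (3) --- more precisely the observation, used there and again implicitly in (1), that a nonempty irreducible subset of an open set of $X$ has its supremum (when it exists) lying inside that open set. This is exactly the fact that open sets are upper sets in the specialization order, so even this step is routine; the real ``work'' in the proposition is just the verification in item (1) that complements of principal lower sets are $SI$-open.
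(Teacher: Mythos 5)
Your proof is correct; the paper itself gives no argument here, since this proposition is quoted from \cite{zhaoho15} and Section~\ref{sec: prelim} explicitly omits proofs. Your derivation — establishing (1) by showing $X\setminus\downarrow x$ is $SI$-open, reading (2) off the definition by contraposition, and reducing (3) and (4) to the fact that open sets are upper in the specialization order — is exactly the standard argument for these facts.
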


On a poset $(P,\leq)$, for any $A \subseteq P$, the set
$\uparrow A$ refers to the set of all elements in $P$ which is below or equal to some element of $A$. The set $\downarrow A$ is defined dually. It can be easily seen that the operator ``$\uparrow$' and ``$\downarrow$'' on the powerset of $P$ are idempotent.

\section{Strongly complete space and $SI^{+}$-continuous function}
\label{sec: strongly complete space}

In \cite{zhaofan10} and \cite{keimellawson09}, a new topology on posets called the \emph{$d$-topology} was introduced, whose definition makes essential use of directed sets.   The $d$-topology is distinct from the previously well-established topologies, i.e., the Scott topology and the Alexandroff topology.  A subset $A$ of a poset $P$ is called \emph{$d$-closed} if for all directed subset $D$ of $P$ such that $D\sub A$, it holds that $\bigsup D\in A$.  In the same spirit, we create the concept of $I$-closed sets and $I$-open sets on a given space $X$ -- which are different from the closed sets in original topology and the irreducibly closed sets given in \cite{zhaoho15}. Unlike the $d$-closed sets, the collection of $I$-closed sets do not define a topology on $X$.

\begin{defi}
Let $X$ be a space. A subset $A$ is \emph{$I$-closed} if for any $F \in \Irr^{+}(X)$ such that $F \sub A$, it holds that $\bigsup F \in A$. A subset of $X$ is \emph{$I$-open} provided its complement is $I$-closed.
\end{defi}

We want to emphasize that the collection of all $I$-open sets of $X$ does not in general form a topology on $X$ as it need not be closed under finite intersection.  In this sense, one immediately recognizes the difference between irreducible sets and directed sets.  However, we still have that the intersection of $I$-closed sets is again $I$-closed, i.e., the collection of $I$-closed sets form a closure system on $X$.

\begin{prop}
Let $\himp{A_j \mid j \in J}$ be a collection of $I$-closed subsets of a space $X$. Then $\bigcap_{j \in J} A_j$ is $I$-closed.
\end{prop}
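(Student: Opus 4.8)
The plan is to unwind the definition of $I$-closedness and observe that it is a condition ``per irreducible set'', which is visibly stable under arbitrary intersections. Write $A := \bigcap_{j \in J} A_j$, and let $F \in \Irr^{+}(X)$ be arbitrary with $F \sub A$; the goal is to show $\bigsup F \in A$. Since $F \in \Irr^{+}(X)$, the supremum $\bigsup F$ exists in $X$ (and, as $X$ is $T_0$, it is a uniquely determined element), so it makes sense to ask for its membership in each $A_j$.

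The single key step is then: for every $j \in J$ we have $F \sub A \sub A_j$, so by the hypothesis that $A_j$ is $I$-closed applied to the irreducible set $F$, we get $\bigsup F \in A_j$. As this holds for every index $j$, it follows that $\bigsup F \in \bigcap_{j \in J} A_j = A$. Since $F$ was an arbitrary member of $\Irr^{+}(X)$ contained in $A$, this shows $A$ is $I$-closed.

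I do not expect any genuine obstacle here: the statement is the ``closure system'' half of the earlier remark that $I$-open sets need \emph{not} form a topology, and the proof is a one-line quantifier manipulation. The only point worth stating carefully is that ``$\bigsup F \in X$'' is part of the definition of $\Irr^{+}(X)$, so the element $\bigsup F$ is available before we test membership in the $A_j$; the remaining argument is just the elementary fact that belonging to an intersection is belonging to each member. No use of irreducibility of the $A_j$ (they need not be irreducible), nor of the topology on $X$ beyond $T_0$-ness for uniqueness of suprema, is required.
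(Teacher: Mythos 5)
Your proof is correct and follows exactly the same route as the paper's: take $F \in \Irr^{+}(X)$ with $F \sub \bigcap_{j} A_j$, note $F \sub A_j$ for each $j$, apply $I$-closedness of each $A_j$ to get $\bigsup F \in A_j$, and conclude $\bigsup F$ lies in the intersection. The additional remark that the existence of $\bigsup F$ is built into the definition of $\Irr^{+}(X)$ is a fair point of care but does not change the argument.
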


\proof
Let $F\in\Irr^{+}(X)$ such that $F\sub \bigcap_{j\in J} A_j$. Then $F\sub A_j$ for all $j\in J$. Since $A_j$ is $I$-closed, then $\bigsup F\in A_j$. This implies $\bigsup F\in \bigcap_{j\in J} A_j$. Hence $\bigcap_{j\in J} A_j$ is $I$-closed.
\qed

Let $X$ be a space and $A \sub X$.
We invent some new notations:
\begin{enumerate}
\item $\Delta(X) := \{U \sub X \mid U \t{ is }I\t{-open}\}$,
\item $\Theta(X) := \{C \sub X \mid C \t{ is }I\t{-closed}\}$, and
\item $\cl_I(A) :=
      \bigcap \{C \sub X\ C\in \Theta(X)\t{ and }A\sub C\}$. The set $\cl_I(A)$ is called the \emph{$I$-closure} of $A$.
\end{enumerate}
Here, notice that the notion of $I$-closed (respectively, $I$-open) sets are defined using the underlying topology.
This differs from the concepts of $d$-closed and $d$-open sets which are intrinsic order-theoretic notions.

The following lemma is immediate.
\begin{lem}
The following hold for any space $X$:
\begin{enumerate}[\normalfont (1)]
\item Every upper set is $I$-closed.
\item Every principal ideal $\downarrow x$ is closed, $I$-closed, and $SI$-closed. Further, it holds that $\cl\kurung{\himp{x}}=\cl_{SI}\kurung{\himp{x}}=\downarrow x$.
\item A subset $U$ of $X$ is in $\Delta X$ if and only if for all $F\in\Irr^{+}(X)$ such that $\bigsup F\in U$ it holds that $F\cap U\neq\emptyset$.
\item $\mc{O}(X)\cap \Delta(X)= \mc{O}_{SI}(X)$.
\item $\Irr(X)\sub \Irr\kurung{SI(X)}$.
\end{enumerate}
\end{lem}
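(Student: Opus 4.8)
The plan is to read off each clause directly from the definitions, leaning on the open-set description of irreducibility recalled in Section~\ref{sec: prelim} and on the characterisation of $SI$-closed sets stated there (namely that $C$ is $SI$-closed precisely when it is closed in $X$ and $I$-closed).

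For (1), let $A$ be an upper set and $F \in \Irr^{+}(X)$ with $F \sub A$; picking any $f \in F$ (irreducible sets being non-empty by definition) we have $f \le \bigsup F$, so $\bigsup F \in A$ since $A$ is upper. For (2), if $F \in \Irr^{+}(X)$ and $F \sub \downarrow x$, then $x$ is an upper bound of $F$, hence $\bigsup F \le x$, i.e.\ $\bigsup F \in \downarrow x$; this gives $I$-closedness. That $\downarrow x$ is closed in $X$ is the standard identity $\cl(\himp{x}) = \downarrow x$ for $T_0$ spaces, and its $SI$-closedness is then immediate from the recalled characterisation. For the chain of closures, $\cl(\himp{x}) = \downarrow x$ as just noted; since $\downarrow x$ is $SI$-closed and contains $x$ we get $\cl_{SI}(\himp{x}) \sub \downarrow x$, while the $SI$-topology being coarser than the original one forces $\cl(\himp{x}) \sub \cl_{SI}(\himp{x})$, so all three sets coincide (this last equality is also recorded in the proposition of Zhao and Ho cited in Section~\ref{sec: prelim}).

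Clauses (3) and (4) are purely formal. For (3) I would take the contrapositive of the defining condition for $X \setminus U$ to be $I$-closed: ``$F \sub X \setminus U \Rightarrow \bigsup F \in X \setminus U$'' for all $F \in \Irr^{+}(X)$ becomes ``$\bigsup F \in U \Rightarrow F \cap U \neq \emptyset$''. Then (4) follows by substituting this reformulation for the second clause in the definition of an irreducibly open set. For (5), let $F \in \Irr(X)$ and let $U,V$ be $SI$-open with $F \cap U \neq \emptyset \neq F \cap V$; since every $SI$-open set lies in $\mc{O}(X)$, irreducibility of $F$ in $X$ yields $F \cap U \cap V \neq \emptyset$, and as $F$ is non-empty we conclude $F \in \Irr(SI(X))$.

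None of this presents a real obstacle; the one spot warranting a little care is the two-way containment of the closures in (2), where one must keep straight that passing to the coarser $SI$-topology can only enlarge closures, while the $SI$-closedness of $\downarrow x$ supplies the opposite inclusion.
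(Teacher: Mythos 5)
Your proof is correct: each clause is verified by exactly the routine unwinding of the definitions (together with the open-set characterisation of irreducibility and the Zhao--Ho characterisation of $SI$-closed sets recalled in Section~\ref{sec: prelim}) that the paper has in mind when it declares the lemma ``immediate'' and omits the proof entirely. In particular your care over the two inclusions in (2) --- that the $SI$-topology is coarser, so $SI$-closures can only grow, while $SI$-closedness of $\downarrow x$ gives the reverse containment --- is exactly the right way to pin down $\cl(\himp{x})=\cl_{SI}(\himp{x})=\downarrow x$.
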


Analogous to directed completeness defined on poset, we define another completeness on a space; this is done by replacing directed sets with irreducible ones.

\begin{defi}
A space $X$ is \emph{strongly complete} (or \emph{sc}, for short) if $\Irr(X)=\Irr^{+}(X)$.  In other words, a space is strongly complete if every irreducible subset has a supremum.  A subspace space $Y$ of $X$ is called an \emph{sc-subspace} of $X$ if $Y$ is an sc-space with respect to the relative topology.
\end{defi}

\begin{prop}
Every sc-space is a dcpo with respect to its specialization order.
\end{prop}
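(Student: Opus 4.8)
The plan is to read the result off directly from Proposition~\ref{prop: elementary of irred}. Recall that a poset is a dcpo precisely when each of its (nonempty) directed subsets possesses a supremum, and that $X$ being strongly complete means exactly that $\Irr(X) = \Irr^{+}(X)$, i.e.\ every irreducible subset of $X$ has a supremum with respect to the specialization order $\leq_X$. So the task reduces to exhibiting directed subsets as a special case of irreducible subsets, which is already available to us.

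First I would take an arbitrary directed subset $D$ of $X$, directedness being understood with respect to $\leq_X$ as per our standing convention. By Proposition~\ref{prop: elementary of irred}(4), $D$ is irreducible, so $D \in \Irr(X)$. Since $X$ is strongly complete, $\Irr(X) = \Irr^{+}(X)$, and hence $D \in \Irr^{+}(X)$; by the very definition of $\Irr^{+}(X)$ this says precisely that $\bigsup D$ exists in $X$. Iterating over all directed $D$ then yields directed completeness of $(X,\leq_X)$.

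The only point deserving a word is that the supremum furnished by membership in $\Irr^{+}(X)$ is genuinely the supremum of $D$ \emph{as a directed set in the poset} $(X,\leq_X)$ — but this is immediate, since in both cases ``supremum'' denotes the least upper bound with respect to $\leq_X$, there being only one order on $X$ under discussion. Consequently every directed subset of $(X,\leq_X)$ has a supremum, which is exactly the assertion that $(X,\leq_X)$ is a dcpo. I do not anticipate any real obstacle: the whole content of the proposition is carried by the containment of directed sets inside irreducible sets recorded in Proposition~\ref{prop: elementary of irred}(4), together with unwinding the definitions of strong completeness and of $\Irr^{+}(X)$.
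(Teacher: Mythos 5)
Your proof is correct and follows exactly the paper's argument: a directed subset is irreducible by Proposition~\ref{prop: elementary of irred}(4), and strong completeness then supplies its supremum. Nothing further is needed.
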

\begin{proof}
Let $D$ be a directed subset of and sc-space $X$.  Then $D$ is an irreducible subset
of $X$ and by the strong completeness of $X$, $\bigsup D$ exists.
\end{proof}

\begin{cor}
Every sober space is a dcpo with respect to its specialization order.
\end{cor}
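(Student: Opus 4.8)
The plan is to obtain this as a consequence of the preceding proposition by establishing the stronger fact that \emph{every sober space is strongly complete}. So let $X$ be a sober space and let $F \in \Irr(X)$ be an arbitrary irreducible subset; I must exhibit $\bigsup F$ in $X$.

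First I would pass to the closure of $F$. By Proposition~\ref{prop: elementary of irred}(2), the set $\cl_X(F)$ is again irreducible, and it is closed by construction, so $\cl_X(F)$ is a closed irreducible subset of $X$. Since $X$ is sober, there is a unique point $x \in X$ with $\cl_X(F) = \cl_X(\himp{x})$, and by the lemma above (item~(2), which gives $\cl(\himp{x}) = \downarrow x$) we conclude $\cl_X(F) = \downarrow x$.

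Next I would verify that $x$ is the supremum of $F$ in the specialization order. Since $F \sub \cl_X(F) = \downarrow x$, the point $x$ is an upper bound of $F$. Conversely, if $y$ is any upper bound of $F$, then $F \sub \downarrow y = \cl_X(\himp{y})$, and the latter set is closed, so $\cl_X(F) \sub \cl_X(\himp{y}) = \downarrow y$; hence $x \in \downarrow y$, i.e.\ $x \leq y$. Thus $x = \bigsup F \in X$, so $F \in \Irr^{+}(X)$. As $F$ was arbitrary, $\Irr(X) = \Irr^{+}(X)$, that is, $X$ is strongly complete, and the preceding proposition then yields that $X$ is a dcpo with respect to its specialization order. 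There is no serious obstacle here; the only point needing care is that the witness $x$ supplied by soberness is the \emph{least} upper bound of $F$ and not merely an upper bound, and this is precisely where the closedness of the principal ideals $\downarrow y$ enters.
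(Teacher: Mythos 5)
Your proof is correct and follows the same route as the paper: the paper simply asserts that every sober space is an sc-space and invokes the preceding proposition, whereas you supply the (standard) verification of that assertion via the closure of the irreducible set and the closedness of principal ideals. No issues.
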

\begin{proof}
Clear since every sober space is an sc-space.
\end{proof}

It is easy to establish the following:
\begin{lem}\label{lem:irreducible_in_subspace}
Let $X$ be a space and $Y$ be a subspace of $X$. Then
\begin{enumerate}[\em(i)]
\item $\Irr(Y) = \himp{F\in \Irr(X)\mid F\sub Y}$.
\item For all $a,~b\in Y$, $a \leq_{Y} b$ if and only if $a \leq_{X} b$.
\end{enumerate}
\end{lem}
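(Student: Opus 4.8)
The plan is to derive both statements straight from the definitions, using the open-set characterization of irreducibility recalled at the start of Section~\ref{sec: prelim} together with the fact that the opens of the subspace $Y$ are exactly the sets $U \cap Y$ with $U \in \mc{O}(X)$.

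For part~(ii), I would simply unwind the specialization order. By definition $a \leq_Y b$ holds precisely when every $Y$-open set containing $a$ also contains $b$; writing such a set as $U \cap Y$ with $U \in \mc{O}(X)$ and using that $a,b \in Y$, the membership conditions $a \in U \cap Y$ and $b \in U \cap Y$ reduce to $a \in U$ and $b \in U$. Quantifying over all $U \in \mc{O}(X)$ therefore yields exactly the condition $a \leq_X b$. This is a one-line verification.

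For part~(i), the inclusion $\Irr(Y) \subseteq \himp{F \in \Irr(X) \mid F \subseteq Y}$ is immediate: any $F \in \Irr(Y)$ is a non-empty subset of $Y$, and $\Irr(Y) \subseteq \Irr(X)$ by Proposition~\ref{prop: elementary of irred}(5). For the converse, take $F \in \Irr(X)$ with $F \subseteq Y$; $F$ is non-empty, so it remains only to check irreducibility in $Y$ via the open-set criterion. If $F$ meets two $Y$-opens $U \cap Y$ and $V \cap Y$ (with $U,V \in \mc{O}(X)$) non-emptily, then in particular $F \cap U \neq \emptyset$ and $F \cap V \neq \emptyset$, so $F \cap U \cap V \neq \emptyset$ by irreducibility of $F$ in $X$; since every point of $F$ lies in $Y$, any such point of $F \cap U \cap V$ lies in $(U \cap Y) \cap (V \cap Y)$, whence $F$ meets $(U \cap Y) \cap (V \cap Y)$ non-emptily, and therefore $F \in \Irr(Y)$.

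There is no genuine obstacle here: the only point requiring any care is to keep track of the ambient space in which opens are taken and to invoke $F \subseteq Y$ at the right moment when transporting non-empty meets between $X$-opens and $Y$-opens. (Alternatively, one can observe that irreducibility of a set $F$ depends only on the subspace topology that $F$ carries, and this topology is the same whether $F$ is viewed inside $Y$ or inside $X$; this delivers part~(i) at once.)
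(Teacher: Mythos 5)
Your proof is correct, and it is the routine unwinding of definitions that the paper itself omits (the lemma is prefaced only by ``It is easy to establish the following''). Both directions of (i) and the verification of (ii) are handled exactly as one would expect — transporting opens between $Y$ and $X$ via $U \cap Y$ and using $F \subseteq Y$ where needed — so there is nothing further to compare.
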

%
%

The following characterization of an sc-subspace of a given sc-space is a direct corollary of the previous lemma.
\begin{cor}\label{cor:ic_subspace_of_ic_space}
Let $X$ be an sc-space and $Y$ be its subspace. Then $Y$ is $I$-closed implies $Y$ is an sc-subspace.
\end{cor}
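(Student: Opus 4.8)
The plan is to verify the definition of strong completeness for $Y$ directly, using the relative topology and the two clauses of Lemma~\ref{lem:irreducible_in_subspace}. So fix an arbitrary $F \in \Irr(Y)$. First I would apply Lemma~\ref{lem:irreducible_in_subspace}(i) to regard $F$ as an irreducible subset of the ambient space sitting inside $Y$, i.e., $F \in \Irr(X)$ and $F \sub Y$. Since $X$ is strongly complete we have $\Irr(X) = \Irr^{+}(X)$, so $F \in \Irr^{+}(X)$ and the supremum $s := \bigsup F$, taken with respect to $\leq_X$, exists in $X$.

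Next I would exploit the hypothesis that $Y$ is $I$-closed: as $F \in \Irr^{+}(X)$ and $F \sub Y$, the defining property of an $I$-closed set gives $s = \bigsup F \in Y$. It then remains only to check that this very point $s$ is the supremum of $F$ computed in the poset $(Y,\leq_Y)$. This is where Lemma~\ref{lem:irreducible_in_subspace}(ii) is used: the order $\leq_Y$ is the restriction of $\leq_X$ to $Y$. Hence $s$ is an upper bound of $F$ in $Y$, and if $t \in Y$ is any other upper bound of $F$ with respect to $\leq_Y$, then $t$ is also an upper bound of $F$ with respect to $\leq_X$, so $s \leq_X t$ and therefore $s \leq_Y t$. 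Thus $\bigsup F$ exists in $Y$; since $F \in \Irr(Y)$ was arbitrary, $\Irr(Y) = \Irr^{+}(Y)$, and because $Y$ carries the relative topology this says precisely that $Y$ is an sc-subspace of $X$.

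I do not expect a substantive obstacle here; the statement is essentially bookkeeping on top of Lemma~\ref{lem:irreducible_in_subspace}. The only step deserving a moment's care is the passage from ``$\bigsup F$ lies in $Y$'' to ``$\bigsup F$ is the supremum of $F$ in $Y$'', which fails for an arbitrary sub-poset but holds once the order on $Y$ is known to be inherited from $X$ — and that is exactly clause (ii) of the lemma.
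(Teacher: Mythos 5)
Your proof is correct and follows essentially the same route as the paper's: pass from $\Irr(Y)$ to $\Irr^{+}(X)$ via Lemma~\ref{lem:irreducible_in_subspace}, use $I$-closedness to land $\bigsup_X F$ in $Y$, and then check it is also the supremum in $Y$. Your extra care in invoking clause (ii) to justify that the order on $Y$ is inherited is a slightly more explicit version of the same final step.
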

\proof
Let $F \in\Irr(Y)$. By Lemma \ref{lem:irreducible_in_subspace}, $F \in \Irr^{+}(X)$. Since $Y$ is $I$-closed, $\bigsup_X F \in Y$. If $u$ is an upper bound of $F$ in $Y$ then it is an upper bound of $F$ in $X$. Hence $\bigsup_X F \leq u$. This implies $\bigsup_X F = \bigsup_Y F$, from which we conclude that $Y$ is an sc-subspace.
\qed

Though several properties of the irreducibly derived topology, $SI(X)$, of $X$ were investigated in \cite{zhaoho15}, nothing was mentioned about the continuity of a function with respect to the irreducibly-derived topologies.   We fill in this gap by giving the following new definitions:
\begin{defi}
Let $X$ and $Y$ be spaces. The function $f:X\lra Y$ is said to be
\begin{enumerate}[(1)]
\item \emph{$I$-continuous} if for all $B\in \Theta(Y)$ it holds that $f^{-1}(B)\in \Theta(X)$;
\item \emph{$SI$-continuous} if the function $f:SI(X)\lra SI(Y)$ is continuous; and
\item \emph{$SI^{+}$-continuous} if $f$ is both continuous and $SI$-continuous.
\end{enumerate}
\end{defi}

Because every identity function is $SI$-continuous and composition of two $SI$-continuous functions is again $SI$-continuous, it is legitimate to form the category \textbf{Top}$\bm{_0^+}$ whose objects are the $T_0$ spaces and morphisms the $SI^+$-continuous functions.

It follows immediately from the definition that $f$ is $I$-continuous if and only if the pre-image of any $I$-open set along $f$ is $I$-open. Indeed, the name ``$I$-continuous'' is so given because $I$-continuous maps behave like continuous ones with respect to the $I$-open sets. The following basic remarks regarding continuous function are necessary for our upcoming development.

\begin{rem}\label{I-cts}
Let $f:X\lra Y$ be a continuous function.
\begin{enumerate}[\normalfont (1)]
\item For any subset $A$ of $X$, it holds that $f\kurung{\cl(A)}\sub\cl\kurung{f(A)}$.
\item If $F\in\Irr(X)$ then $f(F)\in \Irr(Y)$.
\item If $f:X\lra Y$ is continuous and $I$-continuous then it is $SI$-continuous.
\end{enumerate}
\end{rem}

\begin{lem}\label{lem:continuous_w.r.t._irreducibly-derived_topology_implies_monotonic}\hfill
\begin{enumerate}[\em(i)]
\item If $f:X\lra Y$ is continuous then $f$ is monotonic.

\item If $f:X\lra Y$ is $SI$-continuous then $f$ is monotonic.

\item If $f:X\lra Y$ is $SI$-continuous  then for any $F\in\Irr^{+}(X)$, $\bigsup f(F)$ exists in $Y$ and is equal to $f\kurung{\bigsup F}$.

\item If $f:X\lra Y$ is monotonic and $I$-continuous then for any $F\in\Irr^{+}(X)$ it holds that $f\kurung{\bigsup F}=\bigsup f(F)$.
\end{enumerate}
\end{lem}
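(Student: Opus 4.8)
The plan is to prove the four assertions roughly in the given order, building each continuity-to-monotonicity link from the characterisation of the specialization order via closures of points.

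\textbf{Parts (i) and (ii).} For (i), suppose $f : X \to Y$ is continuous and $a \leq_X b$. By Lemma (2) above, $a \leq_X b$ means $a \in \cl_X(\{b\})$. By Remark~\ref{I-cts}(1), $f(\cl_X(\{b\})) \subseteq \cl_Y(\{f(b)\})$, so $f(a) \in \cl_Y(\{f(b)\})$, i.e. $f(a) \leq_Y f(b)$. This uses only that $\cl_X(\{b\}) = {\downarrow} b$ and the standard inclusion $f(\cl A) \subseteq \cl(f(A))$. For (ii), observe that by the earlier Proposition, $\cl_X(\{x\}) = \cl_{SI}(\{x\})$ for every $x$, so the specialization order of $SI(X)$ coincides with that of $X$ (and likewise for $Y$). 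Thus an $SI$-continuous $f : SI(X) \to SI(Y)$ is, by part (i) applied in the $SI$-topologies, monotonic for $\leq_{SI(X)} = \leq_X$ and $\leq_{SI(Y)} = \leq_Y$; hence $f$ is monotonic as a map $X \to Y$.

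\textbf{Part (iii).} Let $F \in \Irr^{+}(X)$, so $s := \bigsup F$ exists in $X$. By part (ii), $f$ is monotonic, so $f(s)$ is an upper bound of $f(F)$; it remains to show it is the \emph{least} upper bound. Let $u \in Y$ be any upper bound of $f(F)$; I want $f(s) \leq_Y u$, i.e. $f(s) \in {\downarrow} u = \cl_Y(\{u\})$, equivalently $f(s) \notin Y \setminus \cl_Y(\{u\})$. Suppose for contradiction $f(s)$ lies in the open set $V := Y \setminus \cl_Y(\{u\})$. Then $s \in f^{-1}(V)$, and $f^{-1}(V)$ is $SI$-open in $X$ (since $f$ is $SI$-continuous and $V$ is open, hence $SI$-open in $Y$; here one uses that opens are $SI$-opens, i.e. Lemma (4)). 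Because $f^{-1}(V) \in \mathcal{O}_{SI}(X)$ and $\bigsup F = s \in f^{-1}(V)$, the defining property of $SI$-open sets gives $F \cap f^{-1}(V) \neq \emptyset$; pick $x \in F$ with $f(x) \in V$. But $u$ is an upper bound of $f(F)$, so $f(x) \leq_Y u$, i.e. $f(x) \in \cl_Y(\{u\})$ — contradicting $f(x) \in V = Y \setminus \cl_Y(\{u\})$. Hence $f(s) \leq_Y u$, and $\bigsup f(F) = f(s) = f(\bigsup F)$. In particular $\bigsup f(F)$ exists in $Y$.

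\textbf{Part (iv).} Now assume $f$ is merely monotonic and $I$-continuous, and let $F \in \Irr^{+}(X)$ with $s = \bigsup F$; we are \emph{not} claiming $\bigsup f(F)$ exists, only the displayed equality $f(s) = \bigsup f(F)$ in the sense that $f(s)$ is the least upper bound whenever we can exhibit it as such — actually the statement asserts equality, so we again show $f(s)$ is the least upper bound of $f(F)$. Monotonicity gives that $f(s)$ is an upper bound. For minimality, let $u$ be any upper bound of $f(F)$. Consider $C := {\downarrow} u = \cl_Y(\{u\})$; by Lemma (2) this set is $I$-closed in $Y$. By $I$-continuity, $f^{-1}(C) \in \Theta(X)$, i.e. $f^{-1}(C)$ is $I$-closed in $X$. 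Since $u$ bounds $f(F)$, we have $F \subseteq f^{-1}({\downarrow} u) = f^{-1}(C)$, and $F \in \Irr^{+}(X)$, so by $I$-closedness $\bigsup F = s \in f^{-1}(C)$, that is $f(s) \leq_Y u$. Therefore $f(\bigsup F) = \bigsup f(F)$.

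\textbf{Main obstacle.} The delicate point is part (iii): one must be careful that an open set $V$ of $Y$ is automatically $SI$-open (Lemma (4): $\mathcal{O}(X) \cap \Delta(X) = \mathcal{O}_{SI}(X)$ shows $SI$-opens include, in fact equal the $I$-open-and-open sets — so opens that are also $I$-open are $SI$-open; but every open set is $I$-closed's complement only if it is $I$-open, which it is, since $SI$-opens are exactly opens that are $I$-open, and conversely the original opens need the $I$-open property). The cleanest route is: $V$ open in $Y$; is $V$ necessarily $SI$-open? Not every open set is $SI$-open in general. Instead one should argue directly with $SI$-open preimages: since $f$ is $SI$-continuous, $f$ as a map $SI(X) \to SI(Y)$ is continuous, and $Y \setminus \cl_Y(\{u\})$ — while open in $Y$ — need not be $SI$-open. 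The fix is to use $\cl_{SI}(\{u\}) = \cl_Y(\{u\}) = {\downarrow} u$ (the earlier Proposition), so $Y \setminus {\downarrow} u$ \emph{is} $SI$-open in $Y$; then $f^{-1}(Y \setminus {\downarrow} u)$ is $SI$-open in $X$, and the rest of the argument goes through verbatim. This is the step where one must invoke the coincidence of closures of points in $X$ and $SI(X)$ rather than naively transporting arbitrary open sets.
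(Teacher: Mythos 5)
Your proof is correct and follows essentially the same route as the paper: continuity-to-monotonicity via principal ideals, and parts (iii)--(iv) by pulling back $\downarrow u$ (or its complement) along $f$ and invoking the $SI$-closedness (resp.\ $I$-closedness) of $f^{-1}(\downarrow u)$. Do note that the self-correction in your final paragraph is essential rather than optional: the claim in the main body of part (iii) that every open subset of $Y$ is $SI$-open is false in general (by Lemma (4), $SI$-opens are exactly the opens that are also $I$-open), and the argument only becomes valid once you replace it with the observation that $\cl_{SI}(\{u\}) = \cl_Y(\{u\}) = \downarrow u$, so that $Y \setminus \downarrow u$ is $SI$-open --- which is precisely the paper's use of the $SI$-closedness of principal ideals.
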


\proof\hfill
\begin{enumerate}[\normalfont (i)]
\item Let $a\leq_X b$. Suppose in contrary that $f(a)\nleq_{Y} f(b)$. Then $f(a)\in V:=Y\setminus \downarrow f(b)$. Since $\downarrow f(b)$ is closed in $Y$,  $f^{-1}(V)$ is open in $X$, hence upper. As $a\in f^{-1}(V)$, we have $b\in f^{-1}(V)$. This leads to a contradiction.

\item Considering the fact that every principal ideal is $SI$-closed and $SI$-closed set is a closed set, the proof is analogue to that in part (i).

\item By monotonicity of $f$ we have $f\kurung{\bigsup F}\geq f(a)$ for all $a\in F$, hence $f\kurung{\bigsup F}$ is an upper bound of $f(F)$. Let $u$ be any upper bound of $f(F)$. Since $\downarrow u$ is $SI$-closed, by assumption $f^{-1}\kurung{\downarrow u}$ is $SI$-closed. As $f(F)\sub \downarrow u$, $F\sub f^{-1}\kurung{f(F)}\sub f^{-1}\kurung{\downarrow u}$. By $SI$-closedness of $f^{-1}\kurung{\downarrow u}$, we have $\bigsup F\in f^{-1}\kurung{\downarrow u}$. Therefore $f\kurung{\bigsup F}\leq u$ which is the desired result.

\item Using the monotonicity of $f$ and considering the fact that every principal ideal is $I$-closed, the proof is analogue to that in part (iii).\qed
\end{enumerate}

Every Scott continuous function preserves existing directed suprema.  In the presence of the continuity assumption,
an $SI$-continuous function can also be characterized by the fact that it preserves the existing irreducible suprema.
\begin{lem}\label{lem:equiv_condition_of_SI-continuity}
If $f: X \lra Y$ is a continuous function, then the following are equivalent:
\begin{enumerate}[\em(1)]
\item The function $f$ is $I$-continuous.
\item The function $f$ is $SI$-continuous;
\item For any $F\in\Irr^{+}(X)$ it holds that $f\kurung{\bigsup F}=\bigsup f(F)$.
\end{enumerate}
\end{lem}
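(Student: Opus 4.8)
The plan is to prove the cycle of implications $(1)\Rightarrow(2)\Rightarrow(3)\Rightarrow(1)$, with almost all of the work already packaged in the preceding remarks and lemmas. The implication $(1)\Rightarrow(2)$ is nothing more than Remark~\ref{I-cts}(3): $f$ is continuous by hypothesis and $I$-continuous by $(1)$, hence $SI$-continuous. The implication $(2)\Rightarrow(3)$ is literally the statement of Lemma~\ref{lem:continuous_w.r.t._irreducibly-derived_topology_implies_monotonic}(iii), which already tells us that an $SI$-continuous $f$ sends the supremum of any $F\in\Irr^{+}(X)$ to the (existing) supremum of $f(F)$.

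The only step requiring a genuine argument is $(3)\Rightarrow(1)$. Here I would take an arbitrary $I$-closed set $B\in\Theta(Y)$ and show that $f^{-1}(B)\in\Theta(X)$. So let $F\in\Irr^{+}(X)$ with $F\subseteq f^{-1}(B)$; then $f(F)\subseteq B$. Since $f$ is continuous, $f(F)\in\Irr(Y)$ by Remark~\ref{I-cts}(2); and since $F\in\Irr^{+}(X)$, hypothesis $(3)$ gives $f\kurung{\bigsup F}=\bigsup f(F)$, so in particular $\bigsup f(F)$ exists and $f(F)\in\Irr^{+}(Y)$. Now $I$-closedness of $B$ applied to $f(F)\subseteq B$ yields $\bigsup f(F)\in B$, i.e.\ $f\kurung{\bigsup F}\in B$, i.e.\ $\bigsup F\in f^{-1}(B)$. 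Thus $f^{-1}(B)$ is $I$-closed, which is exactly $I$-continuity of $f$.

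The main obstacle --- really the only non-bookkeeping point --- lies in this last step: before one can invoke the defining property of an $I$-closed set on the image $f(F)$, one must first know that $f(F)$ is itself in $\Irr^{+}(Y)$, not merely in $\Irr(Y)$. Irreducibility of $f(F)$ is cheap (it comes from continuity via Remark~\ref{I-cts}(2)), and the existence of $\bigsup f(F)$ is precisely what hypothesis $(3)$ hands us when applied to $F$. Once this is noticed, the remainder is a short chain of set-membership manipulations, and no separate treatment of $(2)\Rightarrow(1)$ is needed.
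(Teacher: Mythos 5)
Your proposal is correct and matches the paper's argument in all essentials: $(1)\Rightarrow(2)$ via Remark~\ref{I-cts}, $(2)\Rightarrow(3)$ via Lemma~\ref{lem:continuous_w.r.t._irreducibly-derived_topology_implies_monotonic}(iii), and the same direct verification that preimages of $I$-closed sets are $I$-closed for $(3)\Rightarrow(1)$, with the paper merely listing an extra (redundant) implication rather than a clean cycle. Your explicit observation that hypothesis $(3)$ is what guarantees $f(F)\in\Irr^{+}(Y)$ before $I$-closedness of $B$ can be invoked is a point the paper leaves implicit.
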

\proof
(1) implies (3) and (2) implies (3) are true by Lemma \ref{lem:continuous_w.r.t._irreducibly-derived_topology_implies_monotonic}. (1) implies (2) is immediate by Remarks \ref{I-cts}. We now show that (3) implies (1) is true. Let $A$ be $I$-closed in $Y$ and $F\in\Irr^{+}(X)$ with $F\sub f^{-1}(A)$. By continuity of $f$, $f(F)$ is an irreducible set in $Y$ contained in $A$. Hence, by $I$-closedness of $A$, $\bigsup f(F)=f\kurung{\bigsup F}\in A$. It yields $\bigsup F\in f^{-1}(A)$, and the proof is complete.
\qed

\begin{cor}\label{cor:I_closure_of_SI+_image}
If $f:X\lra Y$ is $SI^{+}$-continuous then for all $A\sub X$ it holds that
$$f\kurung{\cl_{I}(A)}\sub \cl_{I}\kurung{f(A)}.$$
\end{cor}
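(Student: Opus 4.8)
The plan is to exploit the defining ``smallest superset'' property of the $I$-closure together with the $I$-continuity of $f$, rather than argue elementwise. First I would observe that, being $SI^{+}$-continuous, $f$ is in particular continuous and $SI$-continuous; hence by Lemma~\ref{lem:equiv_condition_of_SI-continuity} (the equivalence of $I$-continuity, $SI$-continuity and preservation of irreducible suprema for continuous maps) $f$ is also $I$-continuous. In other words, the preimage under $f$ of any $I$-closed subset of $Y$ is an $I$-closed subset of $X$.

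Next I would note that $\cl_{I}\kurung{f(A)}$ is itself an $I$-closed subset of $Y$: by definition it is the intersection of the family of all $I$-closed subsets of $Y$ containing $f(A)$, and this family is non-empty (for instance $Y \in \Theta(Y)$), so the earlier proposition that arbitrary intersections of $I$-closed sets are $I$-closed applies. Combining this with the previous step, $f^{-1}\kurung{\cl_{I}\kurung{f(A)}} \in \Theta(X)$.

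Finally, since $f(A) \sub \cl_{I}\kurung{f(A)}$ we get $A \sub f^{-1}\kurung{f(A)} \sub f^{-1}\kurung{\cl_{I}\kurung{f(A)}}$, so $f^{-1}\kurung{\cl_{I}\kurung{f(A)}}$ is an $I$-closed set containing $A$. By minimality of $\cl_{I}(A)$ among $I$-closed supersets of $A$ we conclude $\cl_{I}(A) \sub f^{-1}\kurung{\cl_{I}\kurung{f(A)}}$, and applying $f$ to both sides yields $f\kurung{\cl_{I}(A)} \sub \cl_{I}\kurung{f(A)}$, as desired. There is no real obstacle here; the only point requiring care is the logical order, namely that one must establish $I$-continuity of $f$ (via Lemma~\ref{lem:equiv_condition_of_SI-continuity}) before taking preimages. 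An elementwise alternative, pushing an irreducible set with supremum through $f$ using part~(iii) of Lemma~\ref{lem:continuous_w.r.t._irreducibly-derived_topology_implies_monotonic}, would essentially re-derive $I$-continuity by hand, so the preimage argument is the cleaner route.
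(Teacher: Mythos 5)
Your proof is correct and takes essentially the same route as the paper's: both first obtain $I$-continuity of $f$ from Lemma~\ref{lem:equiv_condition_of_SI-continuity} and then use that preimages of $I$-closed sets are $I$-closed, together with the minimality of $\cl_{I}(A)$ among $I$-closed supersets of $A$. The only cosmetic difference is that the paper argues by contradiction with a single $I$-closed witness $M\supseteq f(A)$ missing $f(x)$, whereas you take the preimage of $\cl_{I}\kurung{f(A)}$ directly, using that it is itself $I$-closed as an intersection of $I$-closed sets.
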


\proof
By Lemma \ref{lem:equiv_condition_of_SI-continuity} we have that $f$ is $I$-continuous. Suppose $f\kurung{\cl_{I}(A)}\nsubseteq \cl_{I}\kurung{f(A)}$. Then there exists $x\in \cl_I(A)$ such that $f(x)\notin \cl_I(f(A))$. Hence there exists $M\in \Theta(Y)$ such that $f(A)\sub M$ and $f(x)\notin M$. Since $f$ is $I$-continuous, we have that $f^{-1}(M)\in\Theta(X)$ such that $A\sub f^{-1}\kurung{f(A)}\sub M$ and $x\notin f^{-1}(M)$. It contradicts $x\in \cl_I(A)$. Therefore $f\kurung{\cl_{I}(A)}\sub \cl_{I}\kurung{f(A)}$.
\qed

Corollary \ref{cor:I_closure_of_SI+_image} above is an immediate consequence of Lemma \ref{lem:equiv_condition_of_SI-continuity}.
To end this section, we note the following:
\begin{prop}\label{prop:equality_of_function_w.r.t._cl_I}
If $Y$ is a subspace of a space $X$ and $f,~g:\cl_{I}(Y)\lra Z$ are $SI$-continuous functions satisfying $f\mid_{Y}=g\mid_{Y}$ then $f=g$.
\end{prop}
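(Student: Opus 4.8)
The plan is to work with the equalizer set $E:=\himp{x\in\cl_I(Y)\mid f(x)=g(x)}$. Since $f\mid_{Y}=g\mid_{Y}$ we have $Y\sub E$, and trivially $E\sub\cl_I(Y)$, so everything reduces to the single assertion that $E$ is $I$-closed \emph{in $X$}: once this is known, $E$ is one of the $I$-closed subsets of $X$ containing $Y$, hence $\cl_I(Y)\sub E$ by the very definition of the $I$-closure, and therefore $E=\cl_I(Y)$, i.e.\ $f=g$.

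To show $E\in\Theta(X)$, I would take an arbitrary $F\in\Irr^{+}(X)$ with $F\sub E$ and verify that $\bigsup_X F\in E$. Two preliminary observations are needed. First, $F\sub E\sub\cl_I(Y)$, and $\cl_I(Y)$ is $I$-closed in $X$ (being an intersection of $I$-closed subsets of $X$), so $\bigsup_X F\in\cl_I(Y)$; in particular $\bigsup_X F$ lies in the common domain of $f$ and $g$. Second, I claim that this element is also the supremum of $F$ computed \emph{inside the subspace} $\cl_I(Y)$. Indeed, by Lemma~\ref{lem:irreducible_in_subspace}(i) we have $F\in\Irr(\cl_I(Y))$, and by Lemma~\ref{lem:irreducible_in_subspace}(ii) the specialization order of $\cl_I(Y)$ is the restriction of that of $X$, so any upper bound of $F$ lying in $\cl_I(Y)$ is an upper bound of $F$ in $X$ and hence dominates $\bigsup_X F$; this is exactly the argument already used in Corollary~\ref{cor:ic_subspace_of_ic_space}. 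Thus $F\in\Irr^{+}(\cl_I(Y))$ with $\bigsup_{\cl_I(Y)}F=\bigsup_X F$.

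Now I would invoke Lemma~\ref{lem:continuous_w.r.t._irreducibly-derived_topology_implies_monotonic}(iii) applied to the $SI$-continuous maps $f,g:\cl_I(Y)\lra Z$: the suprema $\bigsup f(F)$ and $\bigsup g(F)$ exist in $Z$ and satisfy $f\kurung{\bigsup_X F}=\bigsup f(F)$ and $g\kurung{\bigsup_X F}=\bigsup g(F)$. Because $F\sub E$, we have $f(x)=g(x)$ for every $x\in F$, so $f(F)=g(F)$ as subsets of $Z$, and consequently $\bigsup f(F)=\bigsup g(F)$. Hence $f\kurung{\bigsup_X F}=g\kurung{\bigsup_X F}$, that is, $\bigsup_X F\in E$, as required. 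I expect the only delicate point to be the middle paragraph: the property preserved by $SI$-continuity of $f$ and $g$ is preservation of suprema of irreducible subsets \emph{of their domain} $\cl_I(Y)$, so one genuinely has to identify $\bigsup_{\cl_I(Y)}F$ with $\bigsup_X F$ — and this is precisely where the $I$-closedness of $\cl_I(Y)$ in $X$ is used, rather than being a mere formality.
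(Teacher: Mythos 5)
Your proof is correct and follows essentially the same route as the paper: show that the equalizer set $\himp{x\in \cl_{I}(Y)\mid f(x)=g(x)}$ is $I$-closed and contains $Y$, hence contains $\cl_{I}(Y)$. The only difference is that you explicitly justify that $F\in\Irr^{+}(\cl_I(Y))$ with $\bigsup_{\cl_I(Y)}F=\bigsup_X F$ before invoking Lemma~\ref{lem:continuous_w.r.t._irreducibly-derived_topology_implies_monotonic}(iii), a point the paper's proof leaves implicit.
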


\proof
If $F\in\Irr^{+}(X)$ such that $F\sub \himp{x\in \cl_{I}(Y)\mid f(x)=g(x)}$ then $f(F)=g(F)$ and by Lemma \ref{lem:continuous_w.r.t._irreducibly-derived_topology_implies_monotonic} we have $f\kurung{\bigsup F}=\bigsup f(F)=\bigsup g(F)=g\kurung{\bigsup F}$.
Thus $\bigsup F\in \himp{x\in \cl_{I}(Y)\mid f(x)=g(x)}$. We have that $\himp{x\in \cl_{I}(Y)\mid f(x)=g(x)}$ is an $I$-closed set containing $Y$. Since $\cl_{I}(Y)$ is the smallest $I$-closed set containing $Y$, it follows that
$$\cl_I(Y)\sub\himp{x\in \cl_{I}(Y)\mid f(x)=g(x)}\sub\cl_I(Y).$$
\qed

\section{Strong completion}
\label{sec: strong completion}
We have prepared ourselves thus far to now perform the promised construction of the strong completion for a $T_0$ space.
Our strong completion is inspired by the procedure of dcpo-completion given in \cite{zhaofan10}.  Let's familiarize the reader with this concept first.  A \emph{dcpo completion} of a poset $P$ is a dcpo $A$ together with a Scott continuous mapping $\eta_P:P \lra A$, such that for any Scott continuous mapping
$f:P \lra B$ to a dcpo $B$ there exists a unique Scott continuous mapping $\hat{f}:A \lra B$ satisfying the equation $f =\hat{f}\circ \eta_P$.

Mimicking the above formulation, we perform the usual replacement exercise to obtain the following definition:
\begin{defi}
A \emph{strong completion} of a space $X$ is a pair $\kurung{Y,\eta_X}$ where $Y$ is an strongly complete space and $\eta:X \lra Y$ is an $SI^{+}$-continuous function such that for any $SI^{+}$-continuous function $f:X \lra Z$ to a strongly complete space $Z$ there exists a unique $SI^{+}$-continuous function $\overline{f}:Y \lra Z$ satisfying the equation $f = \overline{f}\circ \eta_X$.
\end{defi}

Guided by our categorical instincts, it is not surprising that  the following proposition holds:
\begin{prop}\label{prop:uniqueness_of_i-completion}
The strong completion of a space $X$, if exists, is unique up to homeomorphism.
\end{prop}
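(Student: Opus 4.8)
The plan is to use the standard universal-property argument. Suppose $(Y,\eta_X)$ and $(Y',\eta'_X)$ are both strong completions of $X$. Since $Y'$ is strongly complete and $\eta'_X:X\lra Y'$ is $SI^+$-continuous, applying the universal property of $(Y,\eta_X)$ to the map $\eta'_X$ yields a unique $SI^+$-continuous map $\overline{\eta'_X}:Y\lra Y'$ with $\overline{\eta'_X}\circ\eta_X=\eta'_X$. Symmetrically, applying the universal property of $(Y',\eta'_X)$ to $\eta_X$ gives a unique $SI^+$-continuous $\overline{\eta_X}:Y'\lra Y$ with $\overline{\eta_X}\circ\eta'_X=\eta_X$.

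Next I would show these two maps are mutually inverse. Consider the composite $\overline{\eta_X}\circ\overline{\eta'_X}:Y\lra Y$. It is $SI^+$-continuous (composition of $SI^+$-continuous maps is $SI^+$-continuous, as noted in the excerpt), and it satisfies $(\overline{\eta_X}\circ\overline{\eta'_X})\circ\eta_X=\overline{\eta_X}\circ\eta'_X=\eta_X$. But the identity map $\id_Y:Y\lra Y$ is also $SI^+$-continuous and satisfies $\id_Y\circ\eta_X=\eta_X$. Now invoke the \emph{uniqueness} clause of the universal property of $(Y,\eta_X)$ applied to the $SI^+$-continuous map $\eta_X:X\lra Y$ itself (with target the strongly complete space $Y$): there is exactly one $SI^+$-continuous factorization, so $\overline{\eta_X}\circ\overline{\eta'_X}=\id_Y$. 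Symmetrically, $\overline{\eta'_X}\circ\overline{\eta_X}=\id_{Y'}$.

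Hence $\overline{\eta'_X}:Y\lra Y'$ is a bijection whose inverse $\overline{\eta_X}$ is also continuous; since both are topologically continuous, $\overline{\eta'_X}$ is a homeomorphism, and moreover it commutes with the insertion maps, i.e. $\overline{\eta'_X}\circ\eta_X=\eta'_X$. This establishes uniqueness up to homeomorphism.

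I do not expect any genuine obstacle here: this is the formal ``initial object is unique up to isomorphism'' argument, and every ingredient — that $SI^+$-continuous maps compose, that identities are $SI^+$-continuous, and that $\eta_X$ itself factors through $Y$ (trivially, via the identity) — is already available. The only point requiring a moment's care is remembering to apply the uniqueness half of the universal property to the map $\eta_X$ with codomain $Y$ (respectively $\eta'_X$ with codomain $Y'$), rather than to some map into a third space; once that is set up correctly the collapse of the two composites to the identities is immediate.
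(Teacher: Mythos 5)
Your proof is correct and is precisely the standard universal-property argument that the paper has in mind (the paper omits the proof entirely, appealing only to ``categorical instincts''). The key step — applying the uniqueness clause to $\eta_X$ with codomain $Y$ to collapse $\overline{\eta_X}\circ\overline{\eta'_X}$ to $\id_Y$ — is handled correctly.
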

%

Categorically, the existence of a strong completion of a $T_0$ space, once established, is equivalent to the fact that the category \textbf{ic-sp} of  sc-spaces and $SI^+$-continuous functions forms a reflective full subcategory of \textbf{Top}$\bm{_0^+}$ since it is clear that the inclusion functor is right adjoint to the functor $\hat{(-)}$ that assigns $X$ to its strong completion $\hat{X}$.

Just as the dcpo completion of a poset produces a new partial order which must be directed complete, our task of producing a strong completion of a space should be a new space which is strongly complete.  To this end, we appeal to the lower Vietoris topology:
\begin{lem}\label{lem:col_of_SI-closed_is_ic}
We endow the set $\Gamma_{SI}\kurung{X}$, i.e., collection of all $SI$-closed subsets of $X$, with topology generated by subbasic open sets of the form
$$\Diamond U :=
\himp{C\in \Gamma_{SI}\kurung{X}\mid C\cap U\neq \emptyset},$$ where $U\in \mc{O}_{SI}\kurung{X}$.
Then
\begin{enumerate}[\em(1)]
\item the specialization order on $\Gamma_{SI}\kurung{X}$ is the inclusion relation, and
\item the space $\Gamma_{SI}\kurung{X}$ is an sc-space.
\end{enumerate}
\end{lem}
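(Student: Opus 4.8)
The plan is to treat the two assertions separately: assertion~(1) carries the only genuinely topological content, and assertion~(2) is then almost immediate from it.

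For~(1), I would check both directions of the equivalence ``$C \leq C'$ iff $C \subseteq C'$'' in the specialization order of the lower Vietoris topology on $\Gamma_{SI}(X)$. If $C \subseteq C'$, then for any subbasic open $\Diamond U$ with $C \in \Diamond U$ we have $\emptyset \neq C \cap U \subseteq C' \cap U$, so $C' \in \Diamond U$; since every open set of the lower Vietoris topology is a union of finite intersections of sets of the form $\Diamond U$, every open neighbourhood of $C$ is also a neighbourhood of $C'$, that is, $C \leq C'$. Conversely, if $C \nsubseteq C'$, choose $x \in C \setminus C'$ and put $U := X \setminus C'$, which is $SI$-open precisely because $C'$ is $SI$-closed; then $x \in C \cap U$ yields $C \in \Diamond U$, while $C' \cap U = \emptyset$ yields $C' \notin \Diamond U$, so $C \nleq C'$. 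This establishes~(1), and, inclusion being antisymmetric, it also shows that $\Gamma_{SI}(X)$ is $T_0$, hence a legitimate object of our category.

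For~(2), the point is that, by~(1), the poset underlying $\Gamma_{SI}(X)$ is $(\Gamma_{SI}(X), \subseteq)$, and this is a complete lattice: the $SI$-closed sets are precisely the closed sets of the topological space $SI(X)$, so they are closed under arbitrary intersections and include the top element $X$, whence arbitrary suprema exist as well. Concretely, for a family $\mathcal{F} \subseteq \Gamma_{SI}(X)$ the supremum is $\cl_{SI}\bigl(\bigcup \mathcal{F}\bigr)$: this set is $SI$-closed and contains every member of $\mathcal{F}$, while any $SI$-closed upper bound of $\mathcal{F}$ contains $\bigcup \mathcal{F}$ and hence its $SI$-closure. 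In particular every $\mathcal{F} \in \Irr(\Gamma_{SI}(X))$ has a supremum, so $\Irr(\Gamma_{SI}(X)) = \Irr^{+}(\Gamma_{SI}(X))$ and $\Gamma_{SI}(X)$ is an sc-space.

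I do not expect a real obstacle in either part. Part~(2) uses no property of irreducibility whatsoever --- the underlying order is already complete --- so the only thing requiring care is the bookkeeping in~(1): remembering that basic opens of the lower Vietoris topology are finite intersections of the $\Diamond U$'s, and that the separating open $X \setminus C'$ is genuinely $SI$-open because $C'$ was assumed $SI$-closed (and not merely closed in $X$).
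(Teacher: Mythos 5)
Your proposal is correct and follows essentially the same route as the paper: part (1) is checked on subbasic opens in one direction and witnessed by the $SI$-open complement $X\setminus C'$ in the other, and part (2) observes that $(\Gamma_{SI}(X),\subseteq)$ is a complete lattice (the paper describes the supremum of a family as the intersection of all $SI$-closed upper bounds, which is the same set as your $\cl_{SI}\bigl(\bigcup\mathcal{F}\bigr)$). Your explicit remark that irreducibility plays no role in (2) is also implicit in the paper, which proves the existence of suprema for arbitrary families.
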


\proof\hfill
\begin{enumerate}[(1)]

\item Let $C_1,C_2\in \Gamma_{SI}\kurung{X}$.  If $C_1\sub C_2$, then for any $U\in\mc{O}_{SI}\kurung{X}$ we have  $C_1\cap U\neq\emptyset$ implies $C_2\cap U\neq\emptyset$. Thus $C_1\leq C_2$.

Conversely, suppose $C_1\nsubseteq C_2$. Then $V:=X\setminus C_2\in \mc{O}_{SI}\kurung{X}$ satisfies $C_1\cap V\neq \emptyset$ but $C_2\cap V=\emptyset$. This leads to $C_1\nleq C_2$.

\item If $\mc{F}=\himp{A_i\in \Gamma_{SI}\kurung{X}\mid i\in J}$ we have that
$$\bigcap\himp{C\in \Gamma_{SI}\kurung{X}\mid A_i\sub C \t{ for every }i\in J}$$
is the smallest $SI$-closed set containing any member of $\mc{F}$. Therefore it is the supremum of $\mc{F}$.\qed
\end{enumerate}

Let us denote the strong completion of $X$ by $SC(X)$.
The following theorem generalizes Zhao and Fan's procedure of dcpo-completion (\cite{zhaofan10}):

\begin{thm}\label{th:irr_completion}
Consider $\Psi(X) = \{\cl\kurung{\himp{x}} \mid x \in X\}$ as a subset of $\Gamma_{SI}\kurung{\Psi(X)}$.
Then, the subspace $\cl_{I}\kurung{\Psi(X)}$ of $\Gamma_{SI}\kurung{X}$ is a strong completion of $X$.
\end{thm}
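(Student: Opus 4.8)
The plan is to produce the unit $\eta_X$, verify that $\cl_I(\Psi(X))$ is an sc-space and that $\eta_X$ is $SI^+$-continuous, and then establish the universal property (uniqueness first, since it is cheap, then existence).

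\textbf{The unit and the formal parts.} I would take $\eta_X(x)=\cl_X(\himp{x})=\downarrow x$; by Lemma~\ref{lem:col_of_SI-closed_is_ic} this lands in $\Gamma_{SI}(X)$, and $\Psi(X)=\eta_X(X)\sub\cl_I(\Psi(X))$, so $\eta_X$ corestricts to $\cl_I(\Psi(X))$. That $\cl_I(\Psi(X))$ is strongly complete is immediate: it is $I$-closed in $\Gamma_{SI}(X)$ by construction, and $\Gamma_{SI}(X)$ is an sc-space by Lemma~\ref{lem:col_of_SI-closed_is_ic}(2), so Corollary~\ref{cor:ic_subspace_of_ic_space} applies; in particular suprema of irreducible families computed in $\cl_I(\Psi(X))$ agree with those in $\Gamma_{SI}(X)$. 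Continuity of $\eta_X$ is the usual computation $\eta_X^{-1}(\Diamond U)=\himp{x\mid \downarrow x\cap U\neq\emptyset}=U$ (as $U$ is upper), so preimages of subbasic opens are open; indeed $\eta_X$ is an embedding onto its image since $\eta_X(U)=\Diamond U\cap\Psi(X)$. For $SI$-continuity I would invoke Lemma~\ref{lem:equiv_condition_of_SI-continuity}: it is enough that $\eta_X$ preserves existing irreducible suprema. Given $F\in\Irr^{+}(X)$ with $s=\bigsup F$, the set $\eta_X(F)$ is irreducible (Remark~\ref{I-cts}(2)), so its supremum in $\cl_I(\Psi(X))$ exists and coincides with its supremum in $\Gamma_{SI}(X)$, namely the smallest $SI$-closed set containing $\bigcup_{a\in F}\downarrow a=\downarrow F$. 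That set contains $\cl_{SI}(F)$, hence contains $s$ by the $SI$-closure description quoted from \cite{zhaoho15}; being a lower set it then contains $\downarrow s$, while conversely $\downarrow F\sub\downarrow s$ with $\downarrow s$ already $SI$-closed. So the supremum is $\downarrow s=\eta_X(s)$, i.e.\ $\eta_X(\bigsup F)=\bigsup\eta_X(F)$, and $\eta_X$ is $SI^{+}$-continuous.

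\textbf{Uniqueness.} If $\overline f,\overline g:\cl_I(\Psi(X))\to Z$ are $SI^{+}$-continuous with $\overline f\circ\eta_X=f=\overline g\circ\eta_X$, then they agree on $\Psi(X)=\eta_X(X)$; since $\cl_I(\Psi(X))$ is the $I$-closure of the subspace $\Psi(X)$, Proposition~\ref{prop:equality_of_function_w.r.t._cl_I} forces $\overline f=\overline g$.

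\textbf{Existence.} Given $SI^{+}$-continuous $f:X\to Z$ with $Z$ strongly complete, the idea is to extend $f\circ\eta_X^{-1}:\Psi(X)\to Z$ along the $I$-closure, using strong completeness of $Z$ to supply the suprema forced at each stage. Concretely I would introduce the pushforward $\Gamma_{SI}(f):\Gamma_{SI}(X)\to\Gamma_{SI}(Z)$, $C\mapsto\cl_{SI}(f(C))$, and show that it is $SI^{+}$-continuous (continuity from the $SI$-continuity of $f$; $SI$-continuity again via preservation of irreducible suprema and Lemma~\ref{lem:equiv_condition_of_SI-continuity}) and that $\Gamma_{SI}(f)\circ\eta_X=\eta_Z\circ f$, where one uses monotonicity of $f$ (Lemma~\ref{lem:continuous_w.r.t._irreducibly-derived_topology_implies_monotonic}) to get $\cl_{SI}(f(\downarrow x))=\downarrow f(x)$. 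By Corollary~\ref{cor:I_closure_of_SI+_image}, $\Gamma_{SI}(f)$ carries $\cl_I(\Psi(X))$ into $\cl_I\bigl(\Gamma_{SI}(f)(\Psi(X))\bigr)\sub\cl_I(\Psi(Z))$. Finally, when $Z$ is strongly complete $\Psi(Z)$ is already $I$-closed in $\Gamma_{SI}(Z)$: an irreducible family contained in $\Psi(Z)$ corresponds under $\eta_Z$ to an irreducible $S\sub Z$, $\bigsup S$ exists, and by the same closure computation as above the family's supremum is $\downarrow\!\bigsup S\in\Psi(Z)$. Hence $\cl_I(\Psi(Z))=\Psi(Z)$ and $\eta_Z:Z\to\Psi(Z)$ is a homeomorphism, so $\overline f:=\eta_Z^{-1}\circ\Gamma_{SI}(f)\!\restriction_{\cl_I(\Psi(X))}$ is $SI^{+}$-continuous and satisfies $\overline f\circ\eta_X=\eta_Z^{-1}\circ\eta_Z\circ f=f$. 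With Proposition~\ref{prop:uniqueness_of_i-completion} in hand, this completes the proof.

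\textbf{Expected obstacle.} The delicate step is the analysis of the pushforward $\Gamma_{SI}(f)$ — simultaneously its ordinary continuity and its $SI$-continuity — together with the claim that $\Psi(Z)$ is $I$-closed whenever $Z$ is strongly complete. Both require careful juggling of the lower Vietoris topology, the closure operator $\cl_{SI}$, and the three grades of irreducibility (in $X$, in $SI(X)$, and in $\Gamma_{SI}(X)$; cf.\ Lemma~\ref{lem:irreducible_in_subspace} and part (5) of the lemma on $\Irr(X)\sub\Irr(SI(X))$), since $\cl_{SI}$ of an irreducible set need not be well-behaved for arbitrary (non-$SI$-)opens. A more hands-on alternative for existence is to verify directly that $\{C\in\cl_I(\Psi(X))\mid \bigsup_Z f(C)\text{ exists and }C\mapsto\bigsup_Z f(C)\text{ is consistent}\}$ is $I$-closed and contains $\Psi(X)$, and then set $\overline f(C)=\bigsup_Z f(C)$; but the well-definedness and continuity verifications there reduce to essentially the same work.
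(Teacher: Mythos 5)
Your proposal is correct and follows essentially the same route as the paper's proof: the same unit $\eta_X(x)=\cl(\himp{x})$, the same pushforward $C\mapsto\cl_{SI}(f(C))$ (the paper's $f^{*}$), the same observation that $\Psi(Z)$ is $I$-closed in $\Gamma_{SI}(Z)$ for strongly complete $Z$ so that $\eta_Z^{-1}$ (the paper's $k$) supplies the factorization, and the same appeals to Corollary~\ref{cor:I_closure_of_SI+_image} for landing in $\Psi(Z)$ and to Proposition~\ref{prop:equality_of_function_w.r.t._cl_I} for uniqueness. The only cosmetic difference is that the paper establishes that $f^{*}$ preserves suprema by exhibiting it as a left adjoint of $f^{-1}$, where you propose a direct verification.
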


\proof
By Lemma \ref{lem:col_of_SI-closed_is_ic} and Corollary \ref{cor:ic_subspace_of_ic_space} we already know that $\cl_{I}\kurung{\Psi(X)}$ is an sc-space. Let $\eta_X:X\lra \Gamma_{SI}\kurung{X}$ be a function defined by $\eta_X(x)=\cl\kurung{\himp{x}}$ and $U\in \mc{O}_{SI}{(X)}$. We have that
$$\eta_X^{-1}\kurung{\Diamond U\cap \cl_I\kurung{\Psi(X)}}=\himp{x\in X\mid \cl\kurung{\himp{x}}\cap U}=U$$
since $\cl\kurung{\himp{x}}=\downarrow x$ and $U=\uparrow U$. Thus $\eta_X$ is continuous.

Let $F=\himp{x_i\mid i\in J}\in \Irr^{+}(X)$ with $\bigsup F=x$. Since $\eta_X$ is continuous, $\eta_X(F)$ is irreducible in $\cl_{I}\kurung{\Psi(X)}$. Hence $\bigsup\eta_X (F) \in \cl_{I}\kurung{\Psi(X)}$. Let $A$ be any upper bound of $\eta_X(F)$. Then $F\sub A$. By $SI$-closedness of $A$, we have $x\in A$. Therefore $\cl\kurung{\himp{x}}=\downarrow x\sub \downarrow A= A$, implying that $\bigsup \eta_X(F)=\eta_X\kurung{\bigsup F}$. It yields $\eta_X$ is $SI^{+}$-continuous.

Let $Z$ be any sc-space and $f:X\lra Z$ be $SI^{+}$-continuous. We then have $f^{-1}:\Gamma_{SI}\kurung{Z}\lra \Gamma_{SI}\kurung{X}$ is a well-defined function. We define a function $f^{*}:\Gamma_{SI}\kurung{X}\lra \Gamma_{SI}\kurung{Z}$ by $f^{*}(C)=\cl_{SI}\kurung{f(C)}$. We have that for any $\kurung{C,A}\in \Gamma_{SI}\kurung{Z}\times \Gamma_{SI}\kurung{X}$ it holds that
$$f^{*}(C)\sub A\t{ if and only if }C\sub f^{-1}(A)$$
Noting that the specialization order on $\Gamma_{SI}\kurung{Z}$ and $\Gamma_{SI}\kurung{X}$ is inclusion, $f^{*}$ is a left adjoint of $f^{-1}$. Thus $f^{*}$ preserves any supremum, including the supremum of irreducible subsets.
For any $V\in \mc{O}_{SI}(Z)$ we claim that
$$\kurung{f^{*}}^{-1}\kurung{\Diamond V}=\Diamond f^{-1}(V)$$
If $C$ does not belongs to RHS then $f(C)\sub V^{\com}$ which, together with $SI$-closedness of $V^{\com}$, implies $\cl_{SI}\kurung{f(C)}\sub V^{\com}$. Thus $C$ is not a member of LHS. Conversely, if $C$ belongs to RHS then $\cl_{SI}\kurung{f(C)}\cap V\neq \emptyset$ which finishes the proof of our claim. Therefore we have that $f^{*}$ is an $SI^{+}$-continuous function.

For any $x\in X$, we have
$$\himp{f(x)}\sub  f\kurung{\cl_{SI}\kurung{\himp{x}}}\sub \cl_{SI}\kurung{f\kurung{\cl_{SI}\kurung{\himp{x}}}},$$
hence $$\cl_{SI}\kurung{\himp{f(x)}}\sub \cl_{SI}\kurung{f\kurung{\cl_{SI}\kurung{\himp{x}}}}=f^{*}\kurung{\cl_{SI}\kurung{\himp{x}}}.$$
By $SI^{+}$continuity of $f$ we have
$$f\kurung{\cl_{SI}\kurung{\himp{x}}}\sub \cl_{SI}\kurung{\himp{f(x)}},$$
hence
$$f^{*}\kurung{\cl_{SI}\kurung{\himp{x}}}=\cl_{SI}\kurung{f\kurung{\cl_{SI}\kurung{\himp{x}}}}\sub\cl_{SI}\kurung{\himp{f(x)}}$$
Therefore we have that $f^{*}\kurung{\cl\kurung{\himp{x}}}=\cl\kurung{\himp{f(x)}}$.

We define a function $k:\Psi(Z)\lra Z$ by $k\kurung{\cl\kurung{\himp{z}}}=z$. It is continuous since
$$k^{-1}\kurung{U}=\himp{\cl\kurung{\himp{z}}\in\Psi(Z)\mid z\in U}=\Diamond U\cap\Psi(Z)$$
for all $U\in \mc{O}(Z)$.  Let $\mc{F}\in\Irr\kurung{\Gamma_{SI}(Z)}$. Then $\mc{F}=\himp{\cl\kurung{\himp{z_i}}\mid i\in J}$ for some non-empty index set $J$. We have $k(\mc{F})=\himp{z_i\mid i\in J}$. By continuity of $k$ and the fact that $Z$ is an sc-space we have $\bigsup k(\mc{F}):=z\in Z$. For all $i\in J$, $z_i\leq z$ which implies $\cl\kurung{\himp{z_i}}\sub \cl\kurung{\himp{z}}$. Hence $\cl\kurung{\himp{z}}$ is an upper bound of $\mc{F}$. If $A\in \Gamma_{SI}\kurung{Z}$ is an upper bound of $\mc{F}$ then $\cl\kurung{\himp{z_i}}\sub A$ for all $i\in J$, hence the irreducible set $k(\mc{F})$ is a subset of $A$. Thus $SI$-closedness of $A$ gives that $z\in A$. Therefore $\cl\kurung{\himp{z}}=\downarrow z\sub\downarrow A=A$. This yields $\bigsup \mc{F}=\cl\kurung{\himp{z}}\in\Psi(Z)$ which leads to the fact $\Psi(Z)$ is an $I$-closed set in $\Gamma_{SI}(Z)$. This implies $\cl_I\kurung{\Psi(Z)}=\Psi(Z)$. In addition, we have that $k$ is $SI^{+}$-continuous.

By Corollary \ref{cor:I_closure_of_SI+_image}, $SI^{+}$-continuity of $f^{*}$ gives
$$f^{*}\kurung{\cl_I\kurung{\Psi(X)}}\sub \cl_I\kurung{f^{*}\kurung{\Psi(X)}}\sub \cl_I\kurung{\Psi(Z)}=\Psi(Z)$$
We have that the function $\hat{f}:\cl_I\kurung{\Psi(X)}\lra Z$ defined by $\hat{f}=k\circ f^{*}$ is a well-defined $SI^{+}$ continuous function.

For all $x\in X$ we have
$$\kurung{\hat{f}\circ\eta_X}(x)=\hat{f}\kurung{\eta_X(x)}=\hat{f}\kurung{\cl\kurung{\himp{x}}}=k\kurung{f^{*}\kurung{\cl\kurung{\himp{x}}}}=k\kurung{\cl\kurung{\himp{f(x)}}}=f(x)$$
Let $g:\cl_{I}\kurung{\Psi(X)}\lra Z$ be any $SI^{+}$-continuous function satisfying $f=g\circ\eta_X$. For any $\cl\kurung{\himp{x}}\in\Psi(X)$ it holds that $$g\kurung{\cl\kurung{\himp{x}}}=g\kurung{\eta_X(x)}=f(x)=\hat{f}\kurung{\eta_X(x)}=\hat{f}\kurung{\cl\kurung{\himp{x}}}$$
Thus, we have that $\hat{f}$ and $g$ are the same function when we restrict their domain of definition to $\Psi(X)$. By Proposition~\ref{prop:equality_of_function_w.r.t._cl_I}, we conclude that $\hat{f}$ and $g$ are equal, and this gives the uniqueness of $\hat{f}$.
\qed
The following is a direct consequence of Proposition \ref{prop:uniqueness_of_i-completion} and Theorem \ref{th:irr_completion}.

\begin{cor}
Every strong completion of a space $X$ is homeomorphic to $cl_{I}\kurung{\Psi(X)}$.
\end{cor}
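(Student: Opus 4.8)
The plan is to obtain this immediately from the two preceding results, with no new work. Theorem~\ref{th:irr_completion} already tells us that the subspace $\cl_I(\Psi(X))$ of $\Gamma_{SI}(X)$, paired with the map $\eta_X$ constructed there, is a strong completion of $X$. Proposition~\ref{prop:uniqueness_of_i-completion} tells us that any two strong completions of $X$ are homeomorphic. So, given an arbitrary strong completion $(Y,\iota)$ of $X$, I would apply Proposition~\ref{prop:uniqueness_of_i-completion} to the pairs $(Y,\iota)$ and $(\cl_I(\Psi(X)),\eta_X)$ and conclude that $Y$ is homeomorphic to $\cl_I(\Psi(X))$. That is essentially the whole argument.

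For the record, the only content sits in Proposition~\ref{prop:uniqueness_of_i-completion}, and its proof is the standard universal-property uniqueness argument: if $(Y,\eta)$ and $(Y',\eta')$ are strong completions of $X$, the universal property of $(Y,\eta)$ applied to $\eta'$ produces a unique $SI^{+}$-continuous $u:Y\lra Y'$ with $\eta'=u\circ\eta$, and symmetrically a unique $SI^{+}$-continuous $v:Y'\lra Y$ with $\eta=v\circ\eta'$; then $v\circ u$ and $\id_Y$ both satisfy the equation $(-)\circ\eta=\eta$, so the uniqueness clause forces $v\circ u=\id_Y$, and dually $u\circ v=\id_{Y'}$. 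Hence $u$ is a bijection whose inverse $v$ is again $SI^{+}$-continuous.

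The one point deserving a remark — and the closest thing here to an obstacle — is that an $SI^{+}$-continuous bijection with $SI^{+}$-continuous inverse is in particular a homeomorphism of the underlying spaces; but this is immediate, since by definition $SI^{+}$-continuity entails ordinary continuity, so both $u$ and $u^{-1}$ are continuous. Instantiating the resulting homeomorphism with $(Y',\eta')=(\cl_I(\Psi(X)),\eta_X)$, which Theorem~\ref{th:irr_completion} certifies is a genuine strong completion, yields the corollary.
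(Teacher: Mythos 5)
Your argument is exactly the paper's: the corollary is stated there as an immediate consequence of Theorem~\ref{th:irr_completion} (which exhibits $\cl_I\kurung{\Psi(X)}$ as a strong completion) and Proposition~\ref{prop:uniqueness_of_i-completion} (uniqueness up to homeomorphism), and your sketch of the standard universal-property argument for the latter is the intended one. No gaps; this matches the paper's route.
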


\section{Conclusion}
\label{sec: conclusion}
Recent years saw an emphasis on the use of irreducible subsets of a $T_0$ space to investigate the topological and domain-theoretic properties of the space.  A systematic replacement of directed sets in domain theory by irreducible sets allows many of the results in domain theory to be generalized in the context of $T_0$ spaces.  In this paper, we continued this line of approach and managed to obtain a canonical strong completion of a $T_0$ space.  Consequently, we showed that the category of strongly complete spaces is a reflective full subcategory of topological spaces and $SI^+$-continuous maps.  

Note that every sober space is a strong complete space.  But not every strong complete space is sober; indeed even a space which is a complete lattice with respect to its specialization order need not be sober (\cite{isbell82}).  At this moment, little is known about the relationship between order-theoretic properties (e.g., completeness conditions) of the specialization order and the sobriety of a space.  Because our approach makes essential use of irreducible subsets of a space and involves completeness conditions regarding the space, it might be a good starting point to carry out research with regards to the aforementioned question.

\end{document}